\newif\ifshort  \newif\iflong
\newcommand{\Graphs}{{\cal G}}
\newcommand{\Dirs}{{\cal V}}
\newcommand{\dir}{\vec{v}}
\newcommand{\pth}[1]{\ensuremath{\left(#1\right)}}
\newcommand{\R}{{\mathbb{R}}}
\newcommand{\floor}[1]{\ensuremath{\left\lfloor#1\right\rfloor}}
\renewcommand{\leq}{\leqslant}
\renewcommand{\geq}{\geqslant}
\renewcommand{\le}{\leqslant}
\definecolor{cyan}{rgb}{0.0, 0.72, 0.92}
\definecolor{pink}{rgb}{1,0,1}
\newcommand{\olivier}[1]{\marginpar{\tiny\color{OliveGreen} O: #1}}
\newcommand{\tamara}[1]{\marginpar{\tiny\color{Fuchsia} T: #1}}
\newcommand{\david}[1]{\marginpar{\tiny\color{Orange} D: #1}}
\newcommand{\marc}[1]{\marginpar{\tiny\color{blue} M: #1}}
\newcommand{\beppe}[1]{\marginpar{\tiny\color{red} B: #1}}
\newcommand{\syl}[1]{{\color{pink}\marginpar{\tiny\color{pink} SL: #1}}}
\newcommand{\sue}[1]{\marginpar{\tiny\color{green} SW: #1}}
\newcommand{\SL}[1]{{\color{pink} #1}}
\renewcommand{\olivier}[1]{}\renewcommand{\tamara}[1]{}\renewcommand{\david}[1]{}\renewcommand{\marc}[1]{}\renewcommand{\beppe}[1]{}\renewcommand{\syl}[1]{}\renewcommand{\sue}[1]{}\renewcommand{\SL}[1]{{#1}}
\title{Monotone Simultaneous Embeddings of Paths in $\mathbb{R}^d$ \thanks{Research supported in part by the MIUR
    project AMANDA ``Algorithmics for MAssive and Networked DAta'',
    prot. 2012C4E3KT\_001, and NSERC.}}
\title{Monotone Simultaneous \SL{Paths Embeddings}  in $\mathbb{R}^d$}
\author{David Bremner\inst{1} \and
Olivier Devillers\inst{2} \and
Marc Glisse\inst{3} \and
Sylvain Lazard\inst{2} \and
Giuseppe Liotta\inst{4} \and
Tamara Mchedlidze\inst{5} \and
Sue Whitesides\inst{6} \and
Stephen Wismath\inst{7}
}
\institute{
U. New Brunswick,Canada,~\email{bremner@unb.ca}\and
Inria, CNRS, U. Lorraine, France,~\email{olivier.devillers|sylvain.lazard@inria.fr}\and
Inria, Saclay, France,~\email{marc.glisse@inria.fr}\and
U. of Perugia, Italy,~\email{giuseppe.liotta@unipg.it}\and
KIT, Germany,~\email{mched@iti.uka.de}\and
U. of Victoria, Canada,~\email{sue@uvic.ca}\and
U. of Lethbridge, Canada,~\email{wismath@uleth.ca}}
\newcommand\blfootnote[1]{%
  \renewcommand\thefootnote{}\footnote{#1}%
  \addtocounter{footnote}{-1}%
}
\begin{document}

\maketitle

\begin{abstract}
  We study the following problem: Given $k$ paths that share the same vertex
  set, is there a simultaneous geometric embedding of these paths such
  that each individual drawing is monotone in some direction?  We  prove that
  for any dimension $d \geq 2$, there is a set of $d+1$ paths that does
  \emph{not} admit a monotone simultaneous geometric embedding.
\end{abstract}

\section{Introduction}\label{se:intro}

\ifshort
\blfootnote{Research supported in part by the MIUR
    project AMANDA ``Algorithmics for MAssive and Networked DAta'',
    prot. 2012C4E3KT\_001, and NSERC.}
\fi 
Monotone drawings and simultaneous embeddings are well studied topics in graph
drawing. Monotone drawings, introduced by Angelini et
al.~\cite{DBLP:journals/jgaa/AngeliniCBFP12}, are planar drawings of connected
graphs such that, for every pair of vertices,
there is a path between them that
monotonically increases with respect to some direction. Monotone drawings of
planar graphs have been studied both in the fixed and in the variable embedding
settings and both with straight-line edges and with bends allowed along
edges; recent papers on these topics
include~\cite{%
DBLP:journals/algorithmica/AngeliniDKMRSW15,%
felsner_et_al:LIPIcs:2016:5929,%
DBLP:journals/dmaa/Hossain015,%
kindermann2014monotone}.

The simultaneous (geometric) embedding problem was first described in a paper by Bra\ss{} et
al.~\cite{DBLP:journals/comgeo/BrassCDEEIKLM07}. The input is a set of planar graphs that share the
same labeled vertex set (but the set of edges differs from one graph to another); the output is a
mapping of the vertex set to a point set such that each graph admits a crossing-free drawing with
the given mapping. The simultaneous embedding problem has also been studied by restricting/relaxing
some geometric requirements; for example, while every pair of planar graphs sharing the same
labeled vertex set admits a simultaneous embedding where each edge has at most two bends (see,
e.g.,~\cite{DBLP:journals/jgaa/ErtenK05,DBLP:journals/cj/GiacomoDLMW15}), not even a tree and a path
always admit a geometric simultaneous embedding (such that the edges are straight-line
segments)~\cite{DBLP:journals/jgaa/AngeliniGKN12}). See the book chapter on simultaneous embeddings
by T. Bl\"asius et al.~\cite{blasius} for an extensive list of references on
the problem and its variants.

In this paper, we combine the two topics of simultaneous embeddings
and monotone drawings.  Namely, we are interested in computing
geometric simultaneous embeddings of paths such that each path is
monotone in some direction. Let $V ={ 1, 2, \ldots, n }$ be a labeled
set of vertices and let $\Pi =\{ \pi_1,\pi_2,\ldots, \pi_k \}$ be a set
of $k$ distinct paths each having the same set $V$ of vertices. We
want to compute a labeled set of points $P = \{ p_1, p_2,\ldots, p_n \}$
such that point $p_i$ represents vertex $i$ and for each path $\pi_i
\in \Pi$ ($1 \leq i \leq k$)
there exists some
direction for which the drawing of $\pi_i$
is monotone.

It is already known that any two paths on the same vertex set admit a monotone simultaneous
geometric embedding in 2D, while there exist three paths on the same vertex set for which a
simultaneous geometric embedding does not exist even if we drop the monotonicity requirement~\cite{DBLP:journals/comgeo/BrassCDEEIKLM07}.
 An example of three paths that do not have a monotone simultaneous
geometric embedding in 2D can also be derived from a paper of Asinowski on suballowable
sequences~\cite{DBLP:journals/dm/Asinowski08}.
%
%
On the other hand, it is immediate to see that in 3D
any number of paths sharing the same vertex set admits a simultaneous geometric embedding: Namely,
by suitably placing the points in generic position (no 4 coplanar), the complete graph has a
straight-line crossing-free drawing; however, the drawing of each path may not be monotone. This
motivates the following question: Given a set of paths sharing the same vertex set, does the set
admit a monotone simultaneous geometric embedding in $d$-dimensional space for $d \geq 3$?

Our main result is that for any dimension $d \geq 2$, there exists a
set of $d + 1$ paths that does not admit a monotone simultaneous
geometric embedding in $d$-dimensional space. Our proof exploits the
relationship between monotone simultaneous geometric embeddings in
$d$-dimensional space and their corresponding representation in the
dual space. Our approach extends to $d$ dimensions the primal-dual
technique described in a recent paper by Aichholzer et
al.~\cite{DBLP:journals/jgaa/AichholzerHLMPV15} on simultaneous
embeddings of upward planar digraphs in 2D.
\iflong We also
discuss how to test whether a given set of paths sharing the same
vertex set admits a monotone simultaneous geometric embedding in
2D.\syl{This is not done in 2D but in $d$D}}
\else
\syl{removed the sentence about testing the existence of a monotone simultaneous embedding (which is
not done in this short paper)}
\fi

\iflong
The rest of the paper is organized as follows. Preliminaries are in
Section~\ref{se:def}, the primal-dual approach to study monotone
simultaneous geometric embeddability of paths and our main result on
existence of non-embeddable instances of $d+1$ paths in $d$ dimensions
are described in Section~\ref{se:primal-dual}. Testing
paths for monotone simultaneous geometric embeddability in
2D
is studied in the full version of this paper.
\fi

\section{Definitions}\label{se:def}

Let $\dir$ be a vector in $\mathbb{R}^d$ and let $G$ be a directed
acyclic graph with vertex set $V$.
An embedding $\Gamma$
  of  the vertex set $V$ in $\mathbb{R}^d$ is called
\emph{$\dir$-monotone} for $G$ if the 
vectors
 in $\mathbb{R}^d$ corresponding to
oriented edges of $G$ 
have a positive scalar product with $\dir$.
\iflong 

\else \syl{joined the 2 paragraph (saves one line)}
\fi
\sloppy Let $\Dirs=\{\dir_1,\dots,\dir_k\}$ be a set of $k > 1$ vectors in $\mathbb{R}^d$
and let $\Graphs=\{G_1, G_2, \dots, G_k\}$ be a set of $k$ distinct
acyclic digraphs on the same vertex set $V$.
A \emph{$\Dirs$-monotone simultaneous embedding} of $\Graphs$ in
$\mathbb{R}^d$ is an embedding $\Gamma$ of $V$ that is
$\dir_i$-monotone for $G_i$ for any \iflong value of\fi
 $i$.
A \emph{monotone simultaneous embedding} of $\Graphs$ is a $\Dirs$-monotone simultaneous embedding for some set $\Dirs$ of vectors.

If a graph is a path on $n$ (labeled) vertices, it can be trivially identified
with a permutation of $[1,n]$.
We look at the
 monotone simultaneous embedding
problem in the dual space, by mapping points representing
vertices to hyperplanes in  $\mathbb{R}^d$. The dual formulation of
monotone simultaneous embeddings \syl{added an s to embedding} is as follows (the equivalence of
these formulations is shown in the next section).
Let $\Pi=\{\pi_1,\pi_2,\ldots,\pi_k\}$ be a set of $k$ permutations of $[1,n]$.
A \emph{parallel simultaneous embedding} of $\Pi$ in  $\mathbb{R}^d$ is a set of $n$
hyperplanes $H_1, H_2, \ldots , H_n$ and $k$ vertical lines
$L_1, L_2, \ldots, L_k$
 such that the set of $n$ points
$L_j\cap H_{\pi_j(1)}, 
\ldots,L_j\cap H_{\pi_j(n)}$
is linearly ordered from bottom to top along $L_j$,  for all $j$.



\section{The Dual Problem and Non-Existence Results}\label{se:primal-dual}

The first two lemmas give duality results between
monotone simultaneous embeddings
and parallel simultaneous embeddings. \syl{added 2 s}

\begin{lemma}
\label{lem:duality}
If a set of $k$ permutations of $[1,n]$ admits a parallel simultaneous
embedding in $d$ dimensions,
it also admits a monotone simultaneous embedding in $d$ dimensions.
\end{lemma}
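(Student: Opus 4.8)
The plan is to go from a parallel simultaneous embedding directly to a monotone one by dualizing hyperplanes back to points and vertical lines back to directions. Concretely, suppose we are given hyperplanes $H_1,\dots,H_n$ and vertical lines $L_1,\dots,L_k$ witnessing a parallel simultaneous embedding of $\Pi=\{\pi_1,\dots,\pi_k\}$ in $\mathbb{R}^d$. Write each non-vertical hyperplane as $H_i=\{x\in\mathbb{R}^d : x_d = a_i\cdot x' + b_i\}$, where $x'$ denotes the first $d-1$ coordinates of $x$; then associate to vertex $i$ the point $p_i=(a_i,b_i)\in\mathbb{R}^d$ (the standard point–hyperplane duality that sends a hyperplane to the pair ``slope vector, intercept''). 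For each vertical line $L_j$, sitting over a point $q_j\in\mathbb{R}^{d-1}$, associate the direction $\dir_j=(q_j,1)\in\mathbb{R}^d$. I will claim that the embedding $\Gamma:i\mapsto p_i$ is $\dir_j$-monotone for the path $\pi_j$, for every $j$, which gives a $\Dirs$-monotone simultaneous embedding with $\Dirs=\{\dir_1,\dots,\dir_k\}$.

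The key computation is that the height at which $L_j$ meets $H_i$ is exactly $\langle \dir_j, p_i\rangle$: indeed $L_j\cap H_i$ has last coordinate $a_i\cdot q_j + b_i = \langle (q_j,1),(a_i,b_i)\rangle = \langle \dir_j,p_i\rangle$. Therefore the bottom-to-top order of the points $L_j\cap H_{\pi_j(1)},\dots,L_j\cap H_{\pi_j(n)}$ along $L_j$ is precisely the order of the values $\langle \dir_j, p_{\pi_j(1)}\rangle < \dots < \langle \dir_j, p_{\pi_j(n)}\rangle$. Hence for every oriented edge of the path $\pi_j$, going from $\pi_j(t)$ to $\pi_j(t+1)$, the edge vector $p_{\pi_j(t+1)}-p_{\pi_j(t)}$ has strictly positive scalar product with $\dir_j$, which is exactly the definition of $\dir_j$-monotonicity for $G_j=\pi_j$. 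Doing this for all $j$ simultaneously yields the monotone simultaneous embedding.

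I expect the only real subtlety — the step to be careful about — is degeneracy, not the main argument. First, the definition of a parallel simultaneous embedding allows some $H_i$ to be vertical hyperplanes, in which case the ``slope, intercept'' parametrization fails; I would handle this by a preliminary generic perturbation, or by first applying a projective/affine transformation to the configuration that makes all $H_i$ non-vertical and keeps all $L_j$ vertical while preserving the vertical orders along the $L_j$ (the orders are combinatorial data, stable under small perturbations, so a sufficiently small rotation works). A similar remark applies if one wants the directions $\dir_j$ pairwise distinct: since the $L_j$ are distinct vertical lines, the points $q_j$ are distinct, so the $\dir_j=(q_j,1)$ are automatically distinct. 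Second, I should note that the induced embedding $\Gamma$ is an embedding in the required sense — the points $p_i$ are distinct — which again follows because distinct hyperplanes give distinct $(a_i,b_i)$ pairs once verticals are excluded. Apart from packaging these non-degeneracy observations, the proof is the one-line duality computation above.
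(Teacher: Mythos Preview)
Your proof is correct and follows the same point--hyperplane duality as the paper; the paper dualizes the collinear intersection points $L_j\cap H_i$ to parallel hyperplanes and takes $\dir_j$ perpendicular to them, whereas you compute the inner product $\langle(q_j,1),(a_i,b_i)\rangle$ directly, but this is the same argument in slightly more explicit coordinates. Your degeneracy worry about vertical $H_i$ is in fact moot: the definition of a parallel simultaneous embedding requires each $L_j\cap H_i$ to be a single point on the vertical line $L_j$, which already forces every $H_i$ to be non-vertical.
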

\begin{proof}
Consider the following duality between points and hyperplanes,
where we denote by $H^\star$ the dual of a non-vertical hyperplane $H$:
\[ \textstyle  H: x_d = \pth{\sum_{i=1}^{d-1} \alpha_i x_i} - \alpha_0 ,
\qquad\qquad
 H^\star = (\alpha_1, \ldots, \alpha_{d-1}, \alpha_0). \]
  This duality maps
 parallel hyperplanes to points that are vertically aligned (and vice-versa).
Let $(H_i)_{1\le i\le n}$, $(L_j)_{1\le j\le k}$ be a parallel simultaneous embedding and refer
  to Fig.~\ref{fig:duality}
By definition, line $L_j$ crosses hyperplanes $H_1,\dots, H_n$ in the order $ H_{\pi_j(1)}, H_{\pi_j(2)},\ldots, H_{\pi_j(n)}$.
The intersection points $L_j\cap H_{\pi_j(1)},L_j\cap H_{\pi_j(2)},\ldots,L_j\cap H_{\pi_j(n)}$ are
collinear and therefore represent parallel hyperplanes in the dual plane. Consider the vector
  line $\dir_j$
 perpendicular to these hyperplanes and pointing downward.
This line crosses them in the order $(L_j\cap H_{\pi_j(1)})^\star, (L_j\cap H_{\pi_j(2)})^\star, \ldots,(L_j\cap H_{\pi_j(n)})^\star$. Since point $H_i^\star$ lies in hyperplane $(L_j\cap H_i)^\star$,
points ${H_i}^\star,1\leq i\leq n$, project on $\dir_j$
in
the order $H_{\pi_j(1)}^\star,H_{\pi_j(2)}^\star,\ldots, H_{\pi_j(n)}^\star$.
Therefore $(H^\star_i)_{1\le i\le n}$ is an embedding such that
path $\pi_j$ is $\dir_j$-monotone, for all $j$.\qed
\end{proof}

\begin{figure}[t]
\begin{center}
\includegraphics[width=0.65\textwidth,page=1]{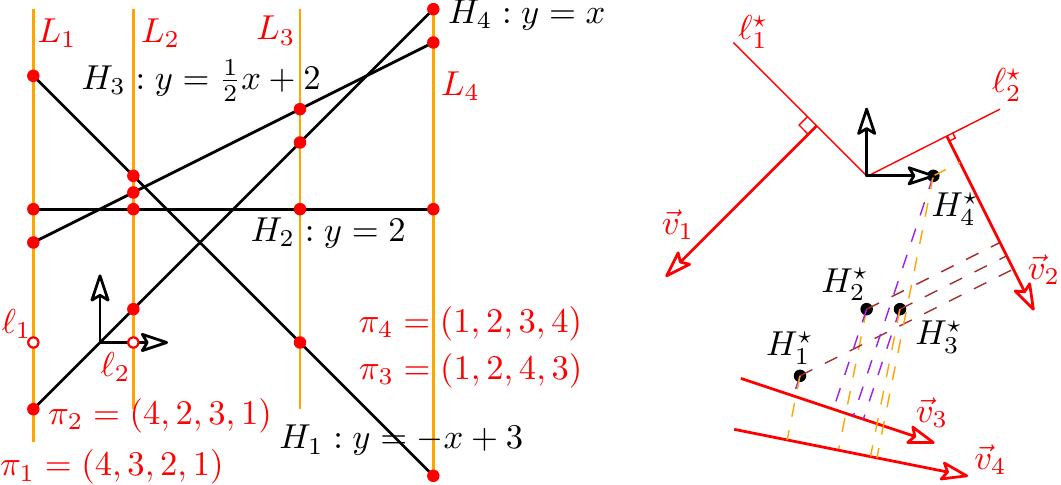}
\end{center}
\caption{Duality between monotone simultaneous embeddings and parallel
  simultaneous embeddings for $k=n=4$ and $d=2$. \iflong Notice that $L_1$ and $L_4$
  correspond to reverse permutations.\fi \label{fig:duality}}
\end{figure}

\syl{Removed in caption of Fig \ref{fig:duality} for length "Notice that $L_1$ and $L_4$
  correspond to reverse permutations}

\begin{lemma}
\label{lem:duality-reciprocal}
If a set $(\pi_j)_{1\le j\le k}$ of $k$ permutations of $[1,n]$ admits a monotone simultaneous
embedding in $d$ dimensions,
there is a set $(\pi'_j)_{1\le j\le k}$
that admits a parallel simultaneous embedding in $\R^d$ where,  for every $j$,
$\pi'_j$ is either equal to \SL{$\pi_j$ or to its reverse.}
\end{lemma}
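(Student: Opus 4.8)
The plan is to run the duality of Lemma~\ref{lem:duality} in reverse. Suppose $\Gamma=(p_i)_{1\le i\le n}$ is a $\Dirs$-monotone simultaneous embedding of $(\pi_j)_{1\le j\le k}$ in $\R^d$, with direction vector $\dir_j$ making each oriented edge of $\pi_j$ have positive scalar product with $\dir_j$. First I would observe that, after a generic rotation of the whole picture, we may assume none of the $\dir_j$ is horizontal (i.e.\ each has nonzero last coordinate) and that the points $p_i$ are in sufficiently general position; this is a standard perturbation argument and costs nothing since the strict inequalities defining monotonicity are open conditions. For each $j$, orient $\dir_j$ so that its $x_d$-component is \emph{negative}; this is where the ``reverse'' enters: flipping the sign of $\dir_j$ reverses the linear order of the projections $\langle p_i,\dir_j\rangle$, so the permutation realized along $\dir_j$ becomes either $\pi_j$ or its reverse --- call it $\pi'_j$.

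Next I would dualize each point $p_i$ to a non-vertical hyperplane $H_i=p_i^\star$ using the same point--hyperplane duality as in Lemma~\ref{lem:duality} (which is an involution, so ``applying it backwards'' is just applying it). The key algebraic fact is that this duality sends a point $q$ and a direction $\dir$ (with $\dir$ not horizontal, normalized so its last coordinate is $-1$, say) to a hyperplane $q^\star$ and a \emph{point} $\dir^\star$, in such a way that the value $\langle q,\dir\rangle$ (up to an affine reparametrization depending only on $\dir$) equals the signed vertical distance from $\dir^\star$ to the hyperplane $q^\star$ --- equivalently, $\dir^\star$ sits on a vertical line $L_j$ that meets the hyperplanes $H_i$ in exactly the order prescribed by the projections of the $p_i$ onto $\dir_j$. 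Concretely, I would take $L_j$ to be the vertical line through the point dual to the hyperplane orthogonal to $\dir_j$ (mirroring the construction in Lemma~\ref{lem:duality} where the perpendicular vector line $\dir_j$ was built from the collinear intersection points); one checks directly from the formula $H:x_d=(\sum_{i=1}^{d-1}\alpha_i x_i)-\alpha_0$, $H^\star=(\alpha_1,\dots,\alpha_{d-1},\alpha_0)$ that ordering along $L_j$ from bottom to top coincides with ordering the $p_i$ by increasing $\langle\,\cdot\,,\dir_j\rangle$ (with our sign choice on $\dir_j$). Hence $L_j$ crosses $H_1,\dots,H_n$ in the order $H_{\pi'_j(1)},\dots,H_{\pi'_j(n)}$.

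Assembling this over all $j$ gives hyperplanes $H_1,\dots,H_n$ and vertical lines $L_1,\dots,L_k$ forming a parallel simultaneous embedding of $(\pi'_j)_{1\le j\le k}$, with each $\pi'_j$ equal to $\pi_j$ or its reverse, which is exactly the claim. The main obstacle --- really the only subtlety --- is the bookkeeping of orientations and the degenerate cases: one must ensure that the duality is well defined (no point $p_i$ dualizes to a vertical hyperplane, no line $L_j$ is forced to be non-vertical), and one must track carefully whether flipping $\dir_j$ to point ``downward'' in the dual picture reverses or preserves each permutation. Both issues are handled by the initial generic-rotation/perturbation step together with a careful sign audit; no deep idea beyond the involutivity of the duality of Lemma~\ref{lem:duality} is needed. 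Note that this asymmetry with Lemma~\ref{lem:duality} (where no reversal was needed) is genuine and unavoidable: the monotone side only sees directions up to the monotonicity constraint, whereas the parallel side picks out a canonical ``upward'' orientation along each $L_j$.
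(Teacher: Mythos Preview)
Your proposal is correct and follows essentially the same route as the paper: dualize the points $p_i$ to hyperplanes, take $L_j$ to be the vertical line through the dual of a hyperplane normal to $\dir_j$, and read off the order along $L_j$, with the ``$\pi_j$ or its reverse'' clause arising from whether $\dir_j$ points upward or downward. The only difference is that you are more explicit than the paper about the degenerate case where some $\dir_j$ is horizontal (handled by your preliminary generic rotation), which the paper's proof tacitly assumes away.
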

\begin{proof}
As in the proof of Lemma~\ref{lem:duality},
we consider point-hyperplane duality. Let
$(p_i)_{1\le i\le n}$ be an embedding $\vec{v}_j$-monotone for
$\pi_j$, and $(p^\star_i)_{1\le i\le n}$ the corresponding set of dual hyperplanes.
Let $H_j$ be a hyperplane with normal vector $\vec{v}_j$, $1\leq j \leq n$.
Define  $L_j$ to be the vertical line through point $H_j^\star$.
By construction, the points $\pth{L_j\cap p^\star_{\pi_j(i)}}_{i}$
appear in order on $L_j$ for one of the two possible orientations of $L_j$.
In particular,  when $\dir_j$ points {downward}, $L_j$ lists the points
 ${L_j\cap p^\star_{\pi_j(i)}}$
from  bottom to top and vice versa.\qed
\end{proof}


We now prove results of existence and
 non-existence of parallel simultaneous embeddings\iflong for certain
configurations\fi, starting with 
a very simple result of existence.

\begin{proposition}
\label{lem:exist}
Any set of $d$ permutations on $n$ vertices  admits a monotone
simultaneous embedding
and a parallel
simultaneous embedding
in $d$ dimensions.
\end{proposition}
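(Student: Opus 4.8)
The plan is to pass to the dual (parallel) formulation and write down one explicit configuration. By Lemma~\ref{lem:duality}, it suffices to produce, for an arbitrary family $\pi_1,\dots,\pi_d$ of permutations of $[1,n]$, a parallel simultaneous embedding in $\R^d$: the monotone simultaneous embedding then comes for free. So I would concentrate on the parallel one. (As an aside, a monotone embedding is transparent on its own — place vertex $i$ at $p_i=(\pi_1^{-1}(i),\dots,\pi_d^{-1}(i))$ and let $\dir_j$ be the $j$-th coordinate vector, so that projection onto $\dir_j$ sorts the points exactly by $\pi_j$ — but routing the argument through the parallel formulation delivers both halves of the statement uniformly.)

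For the parallel embedding I identify each vertical line with its projection $q\in\R^{d-1}$ and use that the hyperplane $H\colon x_d=\bigl(\sum_{i=1}^{d-1}\alpha_i x_i\bigr)-\alpha_0$ sits at height $\langle\alpha,q\rangle-\alpha_0$ over that line. The idea is to let the constant term carry $\pi_d$ and the $j$-th slope carry $\pi_j$. Concretely: take the $d$ vertical lines to project to the origin and to $n\,e_1,\dots,n\,e_{d-1}$, where $e_1,\dots,e_{d-1}$ are the standard basis vectors of $\R^{d-1}$, and define $H_m$ for $m\in[1,n]$ by $\alpha_0^{(m)}=-\pi_d^{-1}(m)$ and $\alpha_j^{(m)}=\pi_j^{-1}(m)$ for $1\le j\le d-1$. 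Over the line at the origin the height of $H_m$ is $-\alpha_0^{(m)}=\pi_d^{-1}(m)$, so the induced order there is exactly $\pi_d$. Over the line at $n\,e_j$ the height of $H_m$ is $n\,\pi_j^{-1}(m)-\alpha_0^{(m)}$, and I would check that for two vertices consecutive in $\pi_j$ this quantity strictly increases: the leading term jumps by $n$, while the contribution of $-\alpha_0^{(m)}\in\{1,\dots,n\}$ changes by at most $n-1$ in absolute value, so the net change is at least $1>0$. Hence the order over the $j$-th line is $\pi_j$, and $(H_m)_m$ together with these $d$ vertical lines is the desired parallel simultaneous embedding.

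The only place where an argument is really needed is that last domination step: the constant term perturbs whatever order the $j$-th slope alone would induce along the $j$-th line, so the slope values must be spread out far enough — scaling the $j$-th line, equivalently the values $\alpha_j^{(m)}$, by any factor $\ge n$ works — to guarantee the induced order is still exactly $\pi_j$. Everything else is bookkeeping. It is worth noting that a hyperplane in $\R^d$ has $d-1$ slope coordinates plus one constant term, i.e.\ exactly $d$ degrees of freedom, which is why $d$ is the natural number of permutations one can accommodate this way and dovetails with the $d+1$ appearing in the main theorem.
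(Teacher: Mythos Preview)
Your proof is correct and follows essentially the same approach as the paper: build a parallel simultaneous embedding directly and invoke Lemma~\ref{lem:duality}, while also noting the rank-as-coordinate monotone embedding as an aside. The paper's version is marginally slicker in that it first places the $n$ labeled points on each of the $d$ vertical lines (at heights given by the permutation ranks) and then \emph{fits} a unique hyperplane through each $d$-tuple of same-label points---since the $d$ projection points are in general position in $\R^{d-1}$---so the order along every line is correct by construction and your domination estimate becomes unnecessary.
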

\begin{proof}
Choose $d$ points in general position in the hyperplane $x_d=0$
and draw a vertical line through
 each of these points.
For each  vertical line, choose a permutation and place on the line  $n$ points numbered according to the permutation.
Fit a hyperplane
through all the points with the same number.
By construction, this set of hyperplanes is a  parallel
simultaneous embedding.
Going to the dual, by Lemma~\ref{lem:duality}, gives a monotone
simultaneous embedding.
Alternatively, the monotone embedding can be seen directly by
considering the rank in the $i$-th permutation as the $i$-th
coordinate.\qed
\end{proof}
\iflong
It is interesting to contrast this construction with the difficulty
of realizing permutations as line transversals of \emph{disjoint}
convex sets; in~\cite{DBLP:journals/dcg/AsinowskiK05} the authors show
for any $k\geq d/2+1$ there exists a family of $k$ permutations not so
realizable in $\R^d$. In particular there exist $3$ permutations not
realizable as line transversals of disjoint convex sets in $\R^3$.
\else
\syl{Removed the discussion on  the difficulty
of realizing permutations as line transversals of \emph{disjoint}
convex sets; We can keep it : it fits in 6 pages with it but I find Figures 2 and 3 really small and
this allows to make them bigger.}
\fi
We now turn our attention to non-existence. For proving that there exists $k=d+1$ permutations that do not admit a parallel
  simultaneous embedding in $d$ dimensions, observe that we can consider any generic placement of the $d$
  first lines $L_j$ since all such placements are equivalent through affine transformations. We then
  construct permutations for $n$ big enough that cannot be realized with any placement of
  $L_{d+1}$. Similarly, constructing $k=d+1$ permutations that cannot be realized even up to
  inversion, yields the non-existence of a monotone simultaneous embedding in $d$ dimensions by Lemma~\ref{lem:duality-reciprocal}. 
We start with dimension~2, then move to dimension~3 and only then, generalize our results to
arbitrary dimension.
Observe that 2D results also follow from
 \cite[Lemma~1 \& Prop. 8]{DBLP:journals/dm/Asinowski08}, but we still present our proofs as a
 warm up for higher  dimensions.

\begin{lemma}
\label{lem:dual_view_2d}
There exists a set of $3$ permutations on $\{0,1,2\}$ that does
not admit a parallel simultaneous embedding in 2D.
\end{lemma}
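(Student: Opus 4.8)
The plan is to pin down three explicit permutations and then argue by a short case analysis on the slopes of the three ``hyperplanes'', which in the plane are just non‑vertical lines. I take $\Pi=\{\pi_1,\pi_2,\pi_3\}$ with $\pi_1=(0,1,2)$, $\pi_2=(1,2,0)$, $\pi_3=(2,0,1)$ — the three cyclic shifts of the identity — where each triple lists, from bottom to top, the labels of the hyperplanes met by the corresponding vertical line. The only feature of $\Pi$ I will exploit is: $(\star)$ \emph{every label $s\in\{0,1,2\}$ is the bottom entry of one $\pi_j$ and the top entry of another}; this is immediate, and it implies in particular that no pair of labels keeps a fixed relative order throughout $\Pi$.

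Suppose for contradiction that $\Pi$ admits a parallel simultaneous embedding, given by lines $H_0,H_1,H_2$ (necessarily non‑vertical and pairwise distinct, since each meets every vertical line in a single point and the points on a line are distinct) and vertical lines $L_1,L_2,L_3$; as the $\pi_j$ are distinct, the $L_j$ have three distinct abscissae, lying in three distinct maximal intervals on which the bottom‑to‑top order of $H_0,H_1,H_2$ is constant (the order changes, by an adjacent transposition, only at an abscissa where two of the $H_i$ cross). I would then dispatch the degenerate slope patterns first. If all three slopes coincide, the lines never cross and only one order is ever realized — too few. If exactly two slopes coincide, the corresponding two lines never cross, so those two labels keep a fixed relative order along every vertical line, contradicting $(\star)$. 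So the slopes are pairwise distinct; relabel $H_a,H_b,H_c$ in increasing order of slope, $(a,b,c)$ a permutation of $(0,1,2)$. Then the order tends to $(c,b,a)$ as $x\to-\infty$ and to $(a,b,c)$ as $x\to+\infty$. If the three lines are concurrent, these are the only two orders realized along vertical lines — again too few.

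This leaves the generic case, which is the heart of the argument: the three pairwise crossings have distinct abscissae, so exactly four orders are realized, in four consecutive intervals, with $(c,b,a)$ leftmost and $(a,b,c)$ rightmost. In $(c,b,a)$ the labels $a$ and $c$ are not adjacent (the label $b$ lies between them), so the crossing $H_a\cap H_c$ cannot be the one bordering the leftmost interval; symmetrically it is not the one bordering the rightmost interval; hence it is the \emph{middle} of the three crossings. A one‑line bookkeeping of adjacent transpositions then shows the two middle orders are $(b,c,a),(b,a,c)$ (when $H_b\cap H_c$ comes first) or $(c,a,b),(a,c,b)$ (when $H_a\cap H_b$ comes first). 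In the first sub‑case $b$ is, in every realized order, in the middle or at the bottom; in the second, in the middle or at the top. In particular it is never the case that one realized order has $b$ at the bottom and another has $b$ at the top. But $\pi_1,\pi_2,\pi_3$ are realized orders, and by $(\star)$ the label $b$ is the bottom entry of one of them and the top entry of another: contradiction. Hence $\Pi$ has no parallel simultaneous embedding.

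The one delicate point is this generic sub‑case — specifically the claim that in a general‑position arrangement of three lines the middle‑slope line cannot reach both the top and the bottom over the four vertical ``chambers''. Everything else (the slope‑pattern reduction and property $(\star)$) is routine. An equivalent, slightly more pedestrian route for that step is to observe that the four realized orders form one of the two maximal chains of adjacent transpositions from $(c,b,a)$ to $(a,b,c)$, and then to check by hand that neither chain contains all three permutations of $\Pi$; the slope argument just packages this check neatly.
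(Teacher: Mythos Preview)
Your proof is correct, but it proceeds by a genuinely different route than the paper's. The paper uses the permutations $\pi_1=(1,0,2)$, $\pi_2=(2,1,0)$, $\pi_3=(0,2,1)$ and argues \emph{line by line}: it fixes $L_1$ to the left of $L_2$, uses each pair of labels to locate the corresponding crossing $H_i\cap H_j$ relative to $L_1$ and $L_2$, and then shows that the three resulting half-line constraints on the abscissa of $L_3$ are incompatible. You instead use the cyclic shifts $(0,1,2),(1,2,0),(2,0,1)$ and argue \emph{arrangement-wise}: you classify the possible slope patterns of $H_0,H_1,H_2$, observe that in the generic case the realized orders along vertical lines form one of the two maximal chains of adjacent transpositions from $(c,b,a)$ to $(a,b,c)$, and then note that in either chain the middle-slope label $b$ never appears both at the top and at the bottom, contradicting your property~$(\star)$.

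Both arguments are short and self-contained in 2D. Yours is essentially the allowable-sequences viewpoint and is arguably cleaner for the planar case. The paper's viewpoint, however, is the one that scales: the ``fix $d$ vertical lines and show no placement of $L_{d+1}$ works'' template is exactly what is reused in the proofs for $d=3$ and for general $d$, whereas the chain enumeration you rely on becomes unwieldy in higher dimensions. One minor remark: your parenthetical that the $L_j$ lie in three distinct intervals of constant order is fine but not actually needed---all you use is that each $\pi_j$ must appear among the realized orders.
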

\begin{proof}
Let $L_1$ and $L_2$ be two vertical lines,
$H_1$ and $H_2$ two other lines,
and let $\tau_1=(1,2)$ and $\tau_2=(2,1)$
be two permutations of $\{1,2\}$.
As in Fig.~\ref{fig:2d}-left, if $L_1$ is left of $L_2$ and
the intersections of $H_1$ and $H_2$ with $L_j$ are ordered according to $\tau_i$,
we can deduce that $H_1\cap H_2$ is between $L_1$ and $L_2$.
It follows that a vertical line crossing $H_1$ below $H_2$ is to the
left of that intersection point and thus to the left of $L_2$.
Similarly,  a vertical line crossing $H_1$ above $H_2$ is to the
right of $L_1$.
\begin{figure}[t]
\begin{center}
\includegraphics[width=0.75\textwidth,page=2]{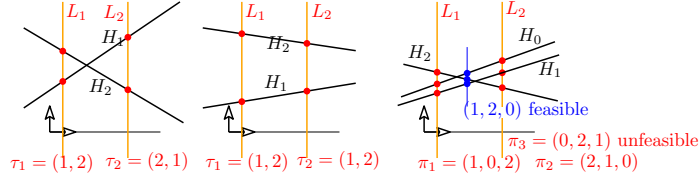}
\end{center}
\caption{Non-existence of two-dimensional parallel simultaneous embeddings.\label{fig:2d}}
\end{figure}
If we now consider $\tau_1=\tau_2=(1,2)$ we have that a vertical line
crossing $H_1$ above $H_2$ is not between $L_1$ and $L_2$
(Fig.~\ref{fig:2d}-center).
Consider now
$\pi_1=(1,0,2)$,
$\pi_2=(2,1,0)$ and
$\pi_3=(0,2,1)$.
Restricting the permutations to $\{1,2\}$ gives that $L_3$ must
be right of $L_1$,
restricting  to $\{0,2\}$ gives that $L_3$ must
be left of $L_2$, and
restricting  to $\{0,1\}$ gives that $L_3$ cannot be between $L_1$
and  $L_2$
(Fig.~\ref{fig:2d}-right).
We deduce that no placement for $L_3$ can realize $\pi_3$.
Notice that the reverse order $(1,2,0)$ can be realized and thus the
dual of this
construction is not a counterexample to simultaneous monotone embeddings.\qed \syl{added an s}
\end{proof}

\begin{lemma}
\label{lem:primal_view_2d}
There exists a set of $3$ permutations on $6$ vertices that does
not admit a monotone simultaneous embedding in 2D.
\end{lemma}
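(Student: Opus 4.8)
The plan is to leverage the 2D non‑embeddability result for \emph{parallel} simultaneous embeddings (Lemma~\ref{lem:dual_view_2d}) together with the duality of Lemma~\ref{lem:duality-reciprocal}, and to overcome the one obstruction to a direct reduction — namely that Lemma~\ref{lem:duality-reciprocal} only yields a parallel embedding of the permutations \emph{or of their reverses}, and the gadget of Lemma~\ref{lem:dual_view_2d} does become realizable after one such reversal (cf.\ the remark closing its proof). I would kill this inversion freedom with \emph{anchor} vertices. Concretely, on the vertex set $\{0,1,2\}\cup\{3,4,5\}$ take
\[
  \sigma_1=(1,0,2,3,4,5),\qquad \sigma_2=(2,1,0,3,4,5),\qquad \sigma_3=(0,2,1,3,4,5),
\]
so that the restriction of each $\sigma_j$ to the \emph{payload} $\{0,1,2\}$ is the $j$-th permutation of the Lemma~\ref{lem:dual_view_2d} gadget, while all three restrict to $(3,4,5)$ on the \emph{anchor} $\{3,4,5\}$.

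Assume for contradiction a monotone simultaneous embedding in the plane, given by points $(p_i)_{0\le i\le5}$ and directions $\vec{v}_1,\vec{v}_2,\vec{v}_3$. Each $\sigma_j$ is a path spanning $V$, so summing edge vectors along the subpath from $a$ to $b$ gives $\vec{v}_j\cdot(p_b-p_a)>0$ whenever $a$ precedes $b$ in $\sigma_j$; in particular the projections onto $\vec{v}_j$ are pairwise distinct (hence $\vec{v}_j$ realizes exactly the order $\sigma_j$), and — since every $\sigma_j$ lists $3$ before $4$ — we get $\vec{v}_j\cdot(p_4-p_3)>0$ for all $j$. Thus the three directions lie in the common open half-plane $\{\vec{u}:\vec{u}\cdot(p_4-p_3)>0\}$. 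Rotate the configuration so that $p_4-p_3$ points along the negative $x_2$-axis; now every $\vec{v}_j$ points downward, and running the construction from the proof of Lemma~\ref{lem:duality-reciprocal} produces a genuine parallel simultaneous embedding of $(\sigma_1,\sigma_2,\sigma_3)$ — with \emph{no} reversal, precisely because no $\vec{v}_j$ points upward. (The hyperplanes $H_j$ are non-vertical because the $\vec{v}_j$ have nonzero $x_2$-coordinate, and the vertical lines $L_j$ are pairwise distinct because the $\vec{v}_j$ are pairwise non-parallel, which again follows from the anchor inequality together with distinctness of the three orders.)

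Finally, deleting the hyperplanes indexed by the anchor $\{3,4,5\}$ leaves a parallel simultaneous embedding of $\bigl((1,0,2),(2,1,0),(0,2,1)\bigr)$ in 2D, which Lemma~\ref{lem:dual_view_2d} forbids — a contradiction. The step I expect to be the crux is the orientation bookkeeping in the middle paragraph: one must be sure that the anchor really forces the \emph{unreversed} permutations out of Lemma~\ref{lem:duality-reciprocal} (reversing some $\sigma_j$ would flip $3$ and $4$ and destroy the common half-plane, so the inversion that lemma allows can never actually be exploited). Everything else reduces to the two-element analysis already carried out inside the proof of Lemma~\ref{lem:dual_view_2d}, plus the trivial remark that the restriction of a parallel simultaneous embedding to a sub-family of its hyperplanes is again one.
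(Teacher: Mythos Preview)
Your argument is correct and takes a genuinely different route from the paper's. The paper interleaves \emph{four} copies of the three-vertex gadget of Lemma~\ref{lem:dual_view_2d} on the six vertices $\{a,\dots,f\}$ so that, for \emph{every} one of the reversal patterns allowed by Lemma~\ref{lem:duality-reciprocal}, some triple of vertices restricts to the forbidden configuration; the case analysis is then discharged by that lemma as a black box. You instead keep a single copy of the gadget and append a common suffix $(3,4,5)$; the shared relation $3\prec 4$ pins all three monotonicity directions into one open half-plane, so after a rotation every $\vec v_j$ points downward and the reversal ambiguity in Lemma~\ref{lem:duality-reciprocal} never materialises. This is cleaner for 2D (indeed two anchor vertices would already suffice, giving a five-vertex example), and the same half-space trick carries over verbatim to $\mathbb{R}^d$: appending two common anchor vertices to the permutations of Lemma~\ref{lem:dual_view} would replace the $3\cdot 2^{2d}$ in Theorem~\ref{th:primal_view} by $3\cdot 2^{d}+2$. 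The paper's case-covering concatenation, by contrast, is purely combinatorial and needs no further geometric reasoning about the~$\vec v_j$ once Lemma~\ref{lem:duality-reciprocal} is available, which is why it is the template reused in Lemma~\ref{lem:primal_view_3d} and Theorem~\ref{th:primal_view}.
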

\begin{proof}
Let $\pi_1=(f,b,d,e,a,c)$,         
$\pi_2=(d,f,c,b,e,a)$, and  
$\pi_3=(f,a,d,c,e,b)$.         
%
%
The sub-permutations of $\pi_1, \pi_2$ and $\pi_3$ on $\{a,b,c\}$  are (by matching $(a,b,c)$ to
$(0,1,2)$) the 3 permutations  that  do not admit a parallel
simultaneous embedding in the proof of
Lemma~\ref{lem:dual_view_2d}.
The same is obtained by reversing only $\pi_1$ (resp. $\pi_2$,   $\pi_3$) and considering sub-permutations on $\{a,c,d\}$
(resp.  $\{d,b,e\}$,  $\{b,f,d\}$). Other possibilities are symmetric and  Lemma~\ref{lem:duality-reciprocal} yields the result.\qed
\end{proof}

\begin{lemma}
\label{lem:dual_view_3d}
There exists a set of $4$ permutations on $5$ vertices that does
not admit a parallel simultaneous embedding in 3D.
\end{lemma}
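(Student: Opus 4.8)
The plan is to run the same style of argument as in Lemma~\ref{lem:dual_view_2d}, but with the one‑dimensional ``a line must be on this side, on that side, and yet not in between'' reasoning replaced by a two‑dimensional one in the horizontal hyperplane. Since any three vertical lines in general position in $\R^3$ are affinely equivalent, I first normalise $L_1,L_2,L_3$ to pass through the vertices $q_1,q_2,q_3$ of a fixed triangle $T$ in $\{x_3=0\}$; the only remaining freedom is then the position $q_4$ of the fourth vertical line $L_4$, together with the (only partially constrained) choice of the hyperplanes $H_0,\dots,H_4$. Writing each $H_i$ as the graph of an affine function $h_i$ on $\{x_3=0\}$, the order in which the vertical line through a point $q$ meets $H_i$ and $H_j$ is governed by the sign of the affine function $h_i-h_j$, whose zero set is a line $\bar\ell_{ij}\subset\{x_3=0\}$. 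Hence $\pi_m|_{\{i,j\}}$ records which side of $\bar\ell_{ij}$ contains $q_m$, and the three restrictions $\pi_1|_{\{i,j\}},\pi_2|_{\{i,j\}},\pi_3|_{\{i,j\}}$ together record how $\bar\ell_{ij}$ separates the fixed triangle $T$ (it either crosses two prescribed edges of $T$, or leaves $T$ entirely on one prescribed side). Combining two such lines that share a vertex, the three lines $\bar\ell_{ij},\bar\ell_{jk},\bar\ell_{ki}$ of a triple concur at the point $z_{ijk}$ where $h_i=h_j=h_k$, around which the six angular sectors realise the six orders of $\{i,j,k\}$ in a fixed circular sequence (consecutive sectors differing by a transposition of cyclically adjacent elements); so the three restricted orders at $q_1,q_2,q_3$ confine $z_{ijk}$ to a region determined by $T$, and $\pi_4|_{\{i,j,k\}}$ then forces $q_4$ into the corresponding sector around $z_{ijk}$.

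With this dictionary, the proof exhibits an explicit set $\{\pi_1,\pi_2,\pi_3,\pi_4\}$ of four permutations of $\{0,1,2,3,4\}$ and shows that once $L_1,L_2,L_3$ realise $\pi_1,\pi_2,\pi_3$ no placement of $L_4$ realises $\pi_4$. The permutations can be designed following the template of Lemma~\ref{lem:dual_view_2d}: a first batch of pair/triple restrictions forces $q_4$ onto a prescribed side of some $\bar\ell_{ij}$ (equivalently, into a prescribed cone at a vertex of $T$ or a prescribed sector around some $z_{ijk}$); a second batch forces it onto a prescribed side of another such line; and a third batch forbids exactly the intersection of the first two — reproducing the pattern ``$q_4$ on the far side of one line, on the near side of another, and yet not in the region common to both'' — so that the feasible set for $q_4$ is empty. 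This gives the claimed non‑existence of a parallel simultaneous embedding in $\R^3$. Taking the three bad permutations of Lemma~\ref{lem:dual_view_2d} on a $3$‑element subset, say $\{0,1,2\}$, as the combinatorial skeleton and inserting the two new vertices $3,4$ into the extensions is the natural way to arrange this.

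The crux — and the reason five vertices are needed rather than four — is robustness: prescribing $\pi_1,\pi_2,\pi_3$ only constrains the \emph{orders} of the $H_i$ along the three fixed lines, so the auxiliary lines $\bar\ell_{ij}$ and points $z_{ijk}$ may still move within those constraints, and with too few pairs and triples one can always wiggle the hyperplanes so that $q_4$ escapes every forbidden region. The two extra vertices supply precisely the additional separation constraints needed to immobilise the configuration (to fix the separation types of enough $\bar\ell_{ij}$ relative to $T$ and to pin down enough $z_{ijk}$). Consequently, the real work of the proof is the finite case analysis checking that, over \emph{all} admissible separation patterns of the relevant pairs and triples with respect to the fixed triangle $T$, the three batches of forced inclusions never share a common point $q_4$; the rest is the routine bookkeeping of reading off sub‑permutations. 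I expect this exhaustiveness check — showing the contradiction follows from the combinatorial data alone, not from any lucky placement of $H_0,\dots,H_4$ — to be the main obstacle.
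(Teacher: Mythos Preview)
Your framework is exactly the paper's: project to the horizontal plane, write each hyperplane as the graph of an affine function, and let the sign of $h_i-h_j$ at a point $q$ record the order of $H_i,H_j$ on the vertical line through $q$. So far so good. But what you have written is a plan, not a proof: you give no explicit permutations, you do not carry out the case analysis, and you say yourself that the exhaustiveness check is ``the main obstacle''. That obstacle is the entire content of the lemma.

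The paper's argument is not an exhaustive case analysis over ``all admissible separation patterns''; it is a direct construction built around one clean geometric fact you do not isolate. If the orders of $H_1,H_2,H_3$ along $L_1,L_2,L_3$ are the three \emph{cyclic shifts} $(1,2,3),(2,3,1),(3,1,2)$, then any vertical line $L$ that sees them in the \emph{reverse} order $(3,2,1)$ must have its foot $\ell$ in the region $\mathcal R$ equal to the triangle $\ell_1\ell_2\ell_3$ together with the half-plane bounded by $\ell_2\ell_3$ not containing $\ell_1$. The point is precisely the one you gloss over: the concurrence point $z_{123}$ is \emph{not} pinned down --- it can be anywhere --- so one must take the union, over all positions of the two separating lines $h_{1,2}$ and $h_{2,3}$ compatible with the data, of the wedge they determine, and that union is exactly $\mathcal R$. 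Your sentence ``the three restricted orders \ldots\ confine $z_{ijk}$ to a region \ldots\ and $\pi_4|_{\{i,j,k\}}$ then forces $q_4$ into the corresponding sector around $z_{ijk}$'' hides this union step, which is where the actual geometry happens.

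Once this fact is in hand, the construction is short: take $\pi_1=(0,1,2,3,4)$, $\pi_2=(2,3,4,0,1)$, $\pi_3=(3,4,0,1,2)$, $\pi_4=(1,3,2,0,4)$. The restrictions to each of the triples $\{0,2,3\}$, $\{1,2,3\}$, $\{0,2,4\}$ are instances of the cyclic-shift pattern above (with the roles of $\ell_1,\ell_2,\ell_3$ permuted), forcing $\ell_4$ into ``triangle $\cup$ half-plane past $\ell_2\ell_3$'', ``triangle $\cup$ half-plane past $\ell_1\ell_3$'', and ``triangle $\cup$ half-plane past $\ell_1\ell_2$'' respectively; their intersection is the closed triangle. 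The restriction to the pair $\{0,1\}$ then forces $\ell_4$ strictly outside the triangle, a contradiction. This is not the route you sketch (basing the skeleton on the Lemma~\ref{lem:dual_view_2d} triple $\{0,1,2\}$ and ``inserting'' $3,4$); the paper instead engineers three \emph{different} triples, each exhibiting the cyclic-shift-versus-reverse pattern, and a single pair to finish.
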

\begin{proof}
As in the proof of Lemma~\ref{lem:exist} we consider $3$ points $\ell_1,\ell_2,\ell_3$
in general position in the hyperplane $x_3=0$ and the $3$
vertical lines $L_1,L_2,L_3$ going through these points.
Let $L$ be a vertical line (candidate position for $L_4$)
and $\ell$ its  intersection with  $x_3=0$.
We consider the $3$ permutations
$\tau_1=(1,2,3)$,
$\tau_2=(2,3,1)$,
$\tau_3=(3,1,2)$
defining the vertical order of the intersections  of $L_1,L_2,L_3$ with
hyperplanes $(H_i)_{1\le i\le 3}$.
We denote by $h_{i,j}$ the projection of the line $H_i\cap H_j$, $1\leq i\neq j \leq 3$, onto the plane $x_3=0$.
Since the three planes $H_i$, $1\leq i \leq 3$
meet in one  point,
the lines $h_{1,2}$, $h_{2,3}$ and $h_{1,3}$ meet at the
projection of that point onto the plane $x_3=0$.

Refer to Fig.~\ref{fig:3d}.
For $L$ to cut $H_2$ below $H_1$,
$\ell$ must be in  the half-plane  limited by
$h_{1,2}$ and containing $\ell_2$,
and, similarly, for $L$ to cut $H_3$ below $H_2$, $\ell$ must be in the
half-plane limited by $h_{2,3}$ and containing $\ell_3$.
Thus, $\ell$ must be in a wedge with apex $h_{1,2}\cap h_{2,3}$
(Fig.~\ref{fig:3d}-left).
Since $h_{1,2}$ separates $\ell_2$ from $\ell_1$  and $\ell_3$, and
 $h_{2,3}$ separates $\ell_3$ from $\ell_1$ and $\ell_2$,
the union of all
 wedges, for all possible positions of $h_{1,2}$ and $h_{2,3}$, is the
union, $\mathcal R$, of triangle $\ell_1\ell_2\ell_3$ and the half-plane limited by $\ell_2\ell_3$ and not
  containing $\ell_1$
(Fig.~\ref{fig:3d}-center).
To summarize, if
$\tau_1=(1,2,3)$,
$\tau_2=(2,3,1)$,
$\tau_3=(3,1,2)$,
and $\tau_4=(3,2,1)$
then
$\ell_4$ (the intersection point of $L_4$ with the hyperplane $x_3=0$) must lie in this region $\mathcal R$.

Next, we build the permutations $\pi_1,\pi_2,\pi_3$ and $\pi_4$ by repeating this example as
follows: $\pi_1=(0,1,2,3,4)$, $\pi_2=(2,3,4,0,1)$, $\pi_3=(3,4,0,1,2)$, and $\pi_4=(1,3,2,0,4)$.
The restriction of these permutations to $\{0,2,3\}$ yields that $\ell_4$ must be in the triangle or
in the half-plane limited by $\ell_2\ell_3$ and not containing $\ell_1$. The restriction to
$\{1,2,3\}$ yields that $\ell_4$ must be in the triangle or  in the half-plane limited by
  $\ell_1\ell_3$ and not containing $\ell_2$.  The restriction to $\{0,2,4\}$ yields that $\ell_4$
must be in the triangle or in the half-plane limited by $\ell_1\ell_2$ and not containing $\ell_3$.
Finally, considering $\{0,1\}$ yields that $\ell_4$ must be outside the triangle (Fig.~\ref{fig:3d}-right).
\begin{figure}[t]
\begin{center}
\includegraphics[width=1\textwidth,page=3]{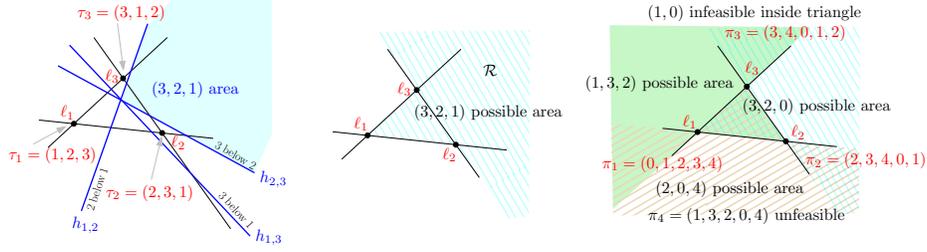}
\end{center}
\caption{Non-existence of \iflong three-dimensional \else 3D \fi parallel simultaneous
  embeddings for 5 vertices.\label{fig:3d}}
\end{figure}
Thus there is no possibility for placing $L_4$.
\qed\end{proof}

\begin{lemma}
\label{lem:primal_view_3d}
There exists a set of $4$ permutations on $40$ vertices that does
not admit a monotone simultaneous embedding in 3D.
\end{lemma}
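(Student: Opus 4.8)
The plan is to mimic exactly the two-step structure that took us from Lemma~\ref{lem:dual_view_2d} to Lemma~\ref{lem:primal_view_2d}, but now starting from the 3D dual obstruction of Lemma~\ref{lem:dual_view_3d}. Recall that by Lemma~\ref{lem:duality-reciprocal}, to rule out a monotone simultaneous embedding of a set $(\pi_j)_{1\le j\le 4}$ it suffices to show that for \emph{every} choice of $\pi'_j\in\{\pi_j,\overline{\pi_j}\}$ (each path or its reverse), the resulting $4$-tuple admits no parallel simultaneous embedding in $\R^3$. So I would take the four obstructing permutations $\sigma_1=(0,1,2,3,4)$, $\sigma_2=(2,3,4,0,1)$, $\sigma_3=(3,4,0,1,2)$, $\sigma_4=(1,3,2,0,4)$ of Lemma~\ref{lem:dual_view_3d} on a $5$-element ground set, and for each of the $2^4=16$ sign patterns $\varepsilon\in\{+,-\}^4$ find a $5$-element subset $S_\varepsilon$ of a larger vertex set $V$ ($|V|=40$) such that restricting $(\varepsilon_1\pi_1,\dots,\varepsilon_4\pi_4)$ to $S_\varepsilon$ reproduces (up to relabeling $S_\varepsilon\leftrightarrow\{0,1,2,3,4\}$) the tuple $(\sigma_1,\sigma_2,\sigma_3,\sigma_4)$. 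Since restriction commutes with reversal and a parallel simultaneous embedding restricts to one on any subset of hyperplanes, any such $S_\varepsilon$ certifies non-embeddability for that sign pattern; covering all $16$ patterns finishes the proof via Lemma~\ref{lem:duality-reciprocal}.

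The construction of the paths themselves is then a combinatorial packing problem: I need four permutations $\pi_1,\pi_2,\pi_3,\pi_4$ of a $40$-element set that simultaneously contain, as induced sub-permutations, the pattern $(\sigma_1,\sigma_2,\sigma_3,\sigma_4)$ for each of the $16$ reversal patterns. A clean way to organize this is to reserve a disjoint block of $5$ ``gadget'' vertices for (almost) each sign pattern. By the symmetry $(\varepsilon_1,\dots,\varepsilon_4)\leftrightarrow(-\varepsilon_1,\dots,-\varepsilon_4)$ — reversing all four paths is the same instance read backwards — one only needs $8$ essentially distinct patterns, which at $5$ vertices each gives the count $40$. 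On each such block of $5$ vertices I specify the four permutations' local orders to be exactly $(\sigma_1,\dots,\sigma_4)$ possibly each globally reversed according to $\varepsilon$; between blocks I am free to interleave the vertices however I like, because the relative order of two vertices in different gadget blocks never interferes with the induced pattern inside a block. Concatenating the blocks (in the same order for all four paths, say) and then, within path $\pi_j$, reversing the local order inside block $\varepsilon$ exactly when $\varepsilon_j=-$, yields the desired four paths on $40$ vertices.

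The main obstacle — and the only place where care is genuinely needed — is bookkeeping: verifying that for the chosen block-to-pattern assignment, the induced sub-permutation of $\pi_j$ on block $S_\varepsilon$ is indeed $\sigma_j$ or $\overline{\sigma_j}$ in accordance with $\varepsilon_j$, and that the resulting restriction matches the Lemma~\ref{lem:dual_view_3d} instance after relabeling. This is routine but must be done for all eight blocks and all four paths; it is exactly analogous to the check in Lemma~\ref{lem:primal_view_2d} where the single block $\{a,\dots,f\}$ handled several sign patterns at once by choosing clever overlapping triples $\{a,b,c\},\{a,c,d\},\{d,b,e\},\{b,f,d\}$. In fact, a sharper version of the argument would try to reuse vertices across patterns (as the 2D proof does, getting down to $6$ rather than $3\cdot 2^3$ vertices) to shrink $40$; but since the statement only asserts existence on $40$ vertices, the disjoint-block construction above suffices and I would present that, leaving the optimization as a remark. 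I expect no geometric difficulty at all here — all the geometry was already extracted in Lemma~\ref{lem:dual_view_3d}; what remains is purely the permutation combinatorics plus the invocation of Lemma~\ref{lem:duality-reciprocal}.
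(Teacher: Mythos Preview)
Your proposal is correct and matches the paper's approach essentially line for line: the paper also concatenates eight disjoint $5$-vertex copies of the Lemma~\ref{lem:dual_view_3d} obstruction, one for each reversal pattern of $(\pi_2,\pi_3,\pi_4)$ with $\pi_1$ held fixed (your global-reversal symmetry), and invokes Lemma~\ref{lem:duality-reciprocal}. The paper likewise remarks that the count $40$ is not optimal and that sharing vertices between groups would reduce it.
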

{\em Sketch of proof.} 
The idea is to concatenate several versions of the counterexample of
the previous lemma to cover all
possibilities of reversing
permutations.
Note that the number of 40
 vertices is not tight.
\qed 

\begin{lemma}
\label{lem:dual_view}
There exists a set of $d+1$ permutations on  $3\cdot 2^d$ 
vertices   that   does
not admit a parallel simultaneous embedding in $d$ dimensions.
\end{lemma}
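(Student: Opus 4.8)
The plan is to extend the argument of Lemma~\ref{lem:dual_view_3d} from three to $d$ dimensions, by induction on $d$, the case $d=2$ going along the lines of Lemma~\ref{lem:dual_view_2d}. Using affine invariance, fix the first $d$ vertical lines $L_1,\dots,L_d$ so that their feet $\ell_1,\dots,\ell_d$ in the hyperplane $\{x_d=0\}\cong\R^{d-1}$ are the vertices of a non-degenerate $(d-1)$-simplex $S$. A parallel simultaneous embedding then reduces to choosing the hyperplanes $H_v$ together with a position for the foot $\ell_{d+1}$ of $L_{d+1}$ in $\R^{d-1}$ realizing the prescribed order $\pi_{d+1}$. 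As in the proofs of Lemmas~\ref{lem:dual_view_2d} and~\ref{lem:dual_view_3d}, for two vertices $u,v$ the condition ``$H_u$ lies below $H_v$ along a vertical line'' is equivalent to ``the foot of that line lies on a prescribed side of the hyperplane $h_{u,v}\subset\R^{d-1}$ obtained by projecting $H_u\cap H_v$'', and whenever several hyperplanes $H_{b_1},\dots,H_{b_m}$ are considered together, all the hyperplanes $h_{b_i,b_j}$ pass through the projection of their single common point $H_{b_1}\cap\dots\cap H_{b_m}$.

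The building block is a ``cyclic-plus-reverse'' gadget on a $d$-element vertex subset $B=\{b_1,\dots,b_d\}$: we require the restrictions of $\pi_1,\dots,\pi_d$ to $B$ to be the $d$ cyclic rotations of $(b_1,\dots,b_d)$, and the restriction of $\pi_{d+1}$ to $B$ to be the reversal of one chosen rotation. Exactly as for $d=3$, the cyclic rotations force, for each consecutive pair $(b_i,b_{i+1})$ (indices mod $d$), the hyperplane $h_{b_i,b_{i+1}}$ to separate the foot matched to $b_{i+1}$ from all the other feet; the reversal then confines $\ell_{d+1}$ to an intersection of half-spaces, one per consecutive pair. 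Letting the $H_v$'s range over all admissible configurations (constrained only by the concurrency above), this intersection sweeps out exactly $S$ together with the open half-space lying beyond the facet $F$ of $S$ opposite the foot matched to $b_1$. Establishing this last identity is the $d$-dimensional analogue of the step in the proof of Lemma~\ref{lem:dual_view_3d} that identifies the union of the wedges with a triangle together with a half-plane, and is the main obstacle; the proof should reduce it to a convexity statement about unions of cones sharing their apex.

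To finish, include $d$ such gadgets, on subsets $B_1,\dots,B_d$, the $i$-th one tuned (by renaming the roles of $\ell_1,\dots,\ell_d$ inside it) so that it forces $\ell_{d+1}\in S\cup F_i^{+}$, where $F_i^{+}$ is the open half-space beyond the facet $F_i$ of $S$ opposite $\ell_i$. Intersecting over $i$ and using the elementary fact that no point of $\R^{d-1}$ can lie strictly beyond every facet hyperplane of the bounded polytope $S$ (the outward facet normals of $S$ positively span $\R^{d-1}$), we obtain $\ell_{d+1}\in S$. Add one more $2$-element gadget $\{a,b\}$ on which $\pi_1,\dots,\pi_d$ all read $(a,b)$ while $\pi_{d+1}$ reads $(b,a)$: this forces $h_{a,b}$ to separate $\ell_{d+1}$ from all of $\ell_1,\dots,\ell_d$, hence $\ell_{d+1}\notin\mathrm{conv}\{\ell_1,\dots,\ell_d\}=S$, contradicting the previous line, so no placement of $L_{d+1}$ exists. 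What remains is the purely combinatorial task of exhibiting permutations $\pi_1,\dots,\pi_{d+1}$ that simultaneously induce all the prescribed restrictions on $B_1,\dots,B_d$ and on $\{a,b\}$; this is carried out by a recursion on $d$ in which the vertex set grows by a bounded factor per dimension, yielding the stated bound of $3\cdot 2^{d}$ vertices.
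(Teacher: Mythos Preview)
Your approach differs substantially from the paper's. You attempt to lift the three-dimensional argument of Lemma~\ref{lem:dual_view_3d} verbatim: one cyclic-plus-reverse gadget on $d$ vertices per facet of the simplex $S$, plus a two-vertex gadget to exclude~$S$ itself. The paper instead abandons the cyclic gadget for $d>3$ and reuses the \emph{two-dimensional} triple $\tau_1=(1,0,2)$, $\tau_2=(2,1,0)$, $\tau_3=(0,2,1)$ from Lemma~\ref{lem:dual_view_2d}: for every bipartition $\{1,\dots,d\}=\Pi_1\cup\Pi_2$ it introduces three fresh vertices, assigning $\tau_1$ to every $L_j$ with $j\in\Pi_1$, $\tau_2$ to every $L_j$ with $j\in\Pi_2$, and $\tau_3$ to $L_{d+1}$. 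Each such triple forces the existence of a hyperplane through $\ell_{d+1}$ separating $\{\ell_j:j\in\Pi_1\}$ from $\{\ell_j:j\in\Pi_2\}$. Taking all $2^d$ bipartitions (hence exactly $3\cdot 2^d$ vertices) and invoking Radon's theorem yields the contradiction: the trivial bipartition gives a hyperplane separating $\ell_{d+1}$ from all $\ell_j$; projecting the $\ell_j$ centrally from $\ell_{d+1}$ onto that hyperplane produces $d$ points in $\R^{d-2}$, whose Radon partition cannot be separated by any hyperplane through $\ell_{d+1}$.

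Your route has two genuine gaps. First, the heart of the argument---that the union, over all admissible hyperplane configurations, of the region cut out by the reversal is exactly $S\cup F^{+}$---is left open; you yourself flag it as ``the main obstacle'' and only gesture at a convexity reduction. In $d=3$ this is the explicit wedge computation in Lemma~\ref{lem:dual_view_3d}, but for larger $d$ the region is an intersection of ${d\choose 2}$ half-spaces through a common apex, and it is not clear that the sweep is governed by the consecutive-pair constraints alone, nor that it coincides with $S\cup F^{+}$. Second, your vertex count is inconsistent with your own construction: $d$ disjoint gadgets of size $d$ plus one pair gives $d^2+2$ vertices, not $3\cdot 2^d$, and the ``recursion on $d$'' you invoke is neither described nor needed for the straightforward concatenation you outline. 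If the geometric claim were established, you would actually be proving a quantitatively stronger statement than the lemma; as written, however, the key step is missing.
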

{\em Sketch of proof.} 
As in Lemma~\ref{lem:dual_view_3d}, the idea is to consider the simplex $(\ell_j)_{1\le j\le d}$
and to construct the permutations for the $L_i$
in order to prevent all possibilities for placing $\ell_{d+1}$.
\qed

To get a result in the dual, the difficulty is that we have to prevent
not only
some permutations but also their reverse versions.

\begin{theorem}
\label{th:primal_view}
There exists a set of $d+1$ permutations on $3\cdot 2^{2d}$ vertices that does
not admit a monotone simultaneous embedding in $d$ dimensions.
\end{theorem}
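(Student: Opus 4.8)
The plan is to derive the theorem from Lemma~\ref{lem:dual_view} and Lemma~\ref{lem:duality-reciprocal}, exactly in the spirit of how Lemma~\ref{lem:primal_view_3d} should follow from Lemma~\ref{lem:dual_view_3d}. By Lemma~\ref{lem:duality-reciprocal} it suffices to build $d+1$ permutations $\pi_1,\dots,\pi_{d+1}$ such that for \emph{every} sign vector $s\in\{+,-\}^{d+1}$ the family $s\cdot(\pi_j)_j$ — obtained by reversing $\pi_j$ whenever $s_j=-$ — admits no parallel simultaneous embedding in $\R^d$. The key saving observation is that reversing \emph{all} of the $\pi_j$ at once does not affect parallel simultaneous embeddability: reflecting the whole configuration of hyperplanes and vertical lines through a horizontal hyperplane $x_d=0$ turns a parallel simultaneous embedding of $(\pi_j)_j$ into one of $(\overline{\pi_j})_j$. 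Hence $s$ and its complement $\overline{s}$ behave identically, and only $2^d$ sign vectors need to be defeated.

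Let $\Sigma=(\sigma_1,\dots,\sigma_{d+1})$ be the family on $3\cdot 2^{d}$ vertices with no parallel simultaneous embedding in $\R^d$ provided by Lemma~\ref{lem:dual_view}; by the reflection remark, $\overline\Sigma=(\overline{\sigma_1},\dots,\overline{\sigma_{d+1}})$ is not realizable either. Pick representatives $s^{(1)},\dots,s^{(2^d)}$, one from each pair $\{s,\overline{s}\}$ of $\{+,-\}^{d+1}$, with $s^{(1)}=(+,\dots,+)$. For $m=1,\dots,2^d$ take a fresh copy of the $3\cdot 2^{d}$ vertices of $\Sigma$ and set $\tau^{(m)}_j=\sigma_j$ if $s^{(m)}_j=+$ and $\tau^{(m)}_j=\overline{\sigma_j}$ if $s^{(m)}_j=-$. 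Finally, for each $j$ let $\pi_j$ be the concatenation $\tau^{(1)}_j\,\tau^{(2)}_j\cdots\tau^{(2^d)}_j$, a permutation of the $2^d\cdot 3\cdot 2^{d}=3\cdot 2^{2d}$ vertices; since the first block equals $\Sigma$ and the $\sigma_j$ are distinct, the $\pi_j$ are pairwise distinct.

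Now fix any $s\in\{+,-\}^{d+1}$ and suppose $s\cdot(\pi_j)_j$ had a parallel simultaneous embedding. Restrict the hyperplanes to those indexed by the vertices of the $m$-th block, keeping all $d+1$ vertical lines; the restricted orders form a parallel simultaneous embedding of the $m$-th block of $s\cdot(\pi_j)_j$. Since restriction to a subset of entries commutes with reversing a sequence, and the reverse of a concatenation reverses each block in place, the orders realized on the $m$-th block are obtained from $\sigma$ by first applying the reversals prescribed by $s^{(m)}$ and then those prescribed by $s$; because applying the same reversal twice is the identity while applying complementary reversals reverses every coordinate, this is $\Sigma$ when $s=s^{(m)}$ and $\overline\Sigma$ when $s=\overline{s^{(m)}}$. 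Choosing $m$ with $s^{(m)}\in\{s,\overline{s}\}$ we obtain a parallel simultaneous embedding of $\Sigma$ or of $\overline\Sigma$ in $\R^d$, contradicting Lemma~\ref{lem:dual_view}. Therefore $s\cdot(\pi_j)_j$ is never realizable, and Lemma~\ref{lem:duality-reciprocal} gives that $(\pi_j)_j$ admits no monotone simultaneous embedding in $\R^d$. The only delicate point is the bookkeeping of how reversal interacts with concatenation and with restriction to a block; once that is pinned down, both the construction and the vertex count $3\cdot 2^{2d}$ are immediate.
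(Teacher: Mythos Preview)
Your proof is correct and follows essentially the same approach as the paper: concatenate $2^d$ pre-reversed copies of the Lemma~\ref{lem:dual_view} counterexample so that every pattern of reversals exposes an unrealizable block, invoking Lemma~\ref{lem:duality-reciprocal} to conclude. You make explicit the reflection argument (that reversing all $d+1$ permutations preserves parallel realizability, cutting the number of blocks from $2^{d+1}$ to $2^d$), which the paper uses implicitly by noting in the 3D case that one may fix the first permutation and only consider reversals of the remaining ones.
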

{\em Sketch of proof.} 
As for Lemma~\ref{lem:primal_view_3d} we concatenate several versions
of previous  counter-example to cover all possibilities of reversing permutations.
\qed
\ifshort
\else
\subsubsection*{Concluding remarks}

In the full version of this paper,
we show how duality can be used to
derive an algorithm that
 decides if a set of $k$ permutations
of $n$ vertices has
 a simultaneous embedding.
The idea is that an embedding is determined by the coordinates of the
points defining the embedding of the vertex set, that is $2n$ real
numbers.
Then, finding monotone embeddings for 3 permutations
reduces to computing a sandwich envelope in dimension $2n$. Similarly,
 computing a monotone embedding for $k$ permutations
reduces to finding a point in the intersection of the horizontal
projection
of $k-2$ such sandwich envelopes.
These operations can be performed in exponential time
using classical results on arrangement of surfaces.
\fi



\bigskip
{\footnotesize
\noindent{\bf Acknowledgements.}
 This work  was initiated during the 15$^{th}$ {\em
 INRIA--McGill--Victoria Workshop on Computational Geometry} at the
 Bellairs Research Institute. The authors wish to thank all the
 participants for creating a pleasant and stimulating atmosphere.
}

\bibliographystyle{abbrvurl}
\bibliography{biblio}


\newpage
\appendix

\section{Proofs\label{se:proofs}}
\noindent\emph{Proof of  Lemma~\ref{lem:primal_view_3d}.}
We consider \\
$\pi_1$={\scriptsize$
  (0,1,2,3,4,\;10,11,12,13,14,\;20,21,22,23,24,\;30,31,32,33,34,\;40,41,42,43,44,\;\\
~\hfill 50,51,52,53,54,\; 60,61,62,63,64,\;70,71,72,73,74)$,}\\
$\pi_2$={\scriptsize$ (2,3,4,0,1,\;12,13,14,10,11,\;22,23,24,20,21,\;32,33,34,30,31,\;41,40,44,43,42,\;\\
~\hfill 51,50,54,53,52,\;61,60,64,63,62,\;71,70,74,73,72)$,}\\
$\pi_3$={\scriptsize$ (3,4,0,1,2,\;13,14,10,11,12,\;22,21,20,24,23,\;32,31,30,34,33,\;43,44,40,41,42,\;\\
~\hfill 53,54,50,51,52,\;62,61,60,64,63,\;72,71,70,74,73)$,} and\\
$\pi_4$={\scriptsize$(1,3,2,0,4,\;14,{10},12,13,11,\;21,23,22,20,24,\;34,{30},32,33,31,\;41,43,42,40,44,\;\\
~\hfill 54,{50},52,53,51,\;61,63,62,60,64,\;74,{70},72,73,71)$}\\
The idea is that we have eight groups of vertices.
Group $\{0,1,2,3,4\}$ restricts exactly to the example of Lemma~\ref{lem:dual_view_3d}
 and prevents going from primal to dual without reversing any
permutations.
Group $\{10,11,12,13,14\}$ prevents going from primal to dual
reversing exactly $\pi_4$.
The other groups prevent all combinations of reversals that leave the first permutation fixed.
In this example we prefer the simplicity of proof to optimizing
the number of vertices. Counterexamples with less vertices can be
easily obtained by sharing vertices between the different groups.
\qed

\medskip\noindent\emph{Proof of  Lemma~\ref{lem:dual_view}.}
As in previous lemma, we generalize Lemma~\ref{lem:dual_view_2d}
without trying to optimize the number of vertices in the permutations.
We consider $d$ points $(\ell_j)_{1\le j\le d}$
in general position in the hyperplane $x_d=0$ and the $d$
vertical lines $(L_j)_{1\le j\le d}$ going through these points.
Let $L_{d+1}$ be a (variable) vertical line and $\ell_{d+1}$ its  intersection with
$x_d=0$.
In a similar manner as in two dimensions consider
$\tau_1=(1,0,2)$,
$\tau_2=(2,1,0)$, and
$\tau_3=(0,2,1)$
and
$\Pi_1\subset \{i\,|\, 1\leq i\leq d\}$,
$\Pi_2= \{i\,|\,1\leq i\leq d\}\setminus \Pi_1$, and
$\Pi_3= \{{d+1}\}$;
then assume that $\tau_i$  is the order of \SL{hyperplanes} $H_0, H_1, H_2$ along \SL{$L_k$} 
for \SL{any} $k \in  \Pi_i$.
\SL{In other words, above $\ell_k$, we have for instance $H_2$ above
$H_1$ for $k\in \Pi_1$ and the converse for $k\in\Pi_2\cup\Pi_3$.}

In projection, this means that $h_{1,2}=H_1\cap H_2$ separates
$\pth{\ell_i}_{i\in\Pi_1}$ from $\pth{\ell_i}_{i\in\Pi_2}$ and
that $\ell_{d+1}$ is on the side of  $\pth{\ell_i}_{i\in\Pi_2}$.
Thus, $\ell_{d+1}$ must be in the pink hatched part in
Fig.~\ref{fig:dd}.
\begin{figure}[t]
\begin{center}
\includegraphics[width=0.5\textwidth,page=4]{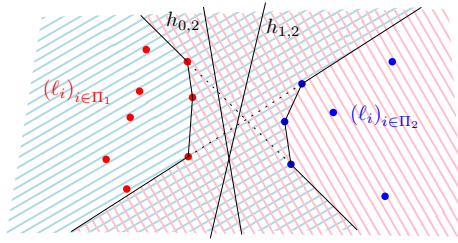}
\end{center}
\caption{Non-existence of a $d$-dimensional parallel simultaneous
  embedding.
\label{fig:dd}}
\end{figure}
Considering $h_{0,2}$ yields similarly that $\ell_{d+1}$
must be in the blue
hatched part, and consequently, there is a hyperplane through $\ell_{d+1}$
that separates $\pth{\ell_i}_{i\in\Pi_1}$ from
$\pth{\ell_i}_{i\in\Pi_2}$.

Now we construct $\pi_1, \ldots, \pi_{d+1}$ by concatenating
one copy of  $\tau_1$, $\tau_2$, and $\tau_3$ with three new vertices
for each possible partition of $\{i\,|\, 1\leq i\leq d\}$ in $\Pi_1$ and
$\Pi_2$.
\SL{For any such partition,  there is a hyperplane through $\ell_{d+1}$
that separates $\pth{\ell_i}_{i\in\Pi_1}$ from
$\pth{\ell_i}_{i\in\Pi_2}$.}
\SL{
Points $(\ell_j)_{1\le j\le d+1}$ can be seen in $\R^{d-1}$ (since $x_d=0$) and
considering the partition with $\Pi_1=\emptyset$ yields that  there is a hyperplane (in $\R^{d-1}$)
through $\ell_{d+1}$ with all  $(\ell_j)_{1\le j\le d}$ on one side. In other words, there is  a
hyperplane  (in $\R^{d-1}$) separating  $\ell_{d+1}$ from $(\ell_j)_{1\le j\le d}$. Projecting
$(\ell_j)_{1\le j\le d}$ onto that plane (with a central projection with center $\ell_{d+1}$) yields
$d$ points in $\R^{d-2}$, which can be partitioned in two sets,
whose convex hulls intersect by
Radon's theorem~\cite{Radon1921}. For this partition, there is no hyperplane through $\ell_{d+1}$
that separates $\pth{\ell_i}_{i\in\Pi_1}$ from
$\pth{\ell_i}_{i\in\Pi_2}$, which is a contradiction. Hence, these $d+1$ permutations on $3\cdot 2^d$ vertices prevent all
placements for $\ell_{d+1}$, which concludes the proof. (Note however that this number of vertices is clearly non-optimal.)
\qed
}


\medskip\noindent\emph{Proof of  Theorem~\ref{th:primal_view}.}
\sloppy A counterexample of $d+1$ permutations $\pth{\pi_j}_{1\le j\le d}$ with no monotone simultaneous
embedding must be a counterexample of $d+1$ permutations with no
parallel  simultaneous embedding for any set of permutations
obtained from  $\pth{\pi_j}_{1\le j\le d}$ by reversing some of these
permutations.
Since there are $2^d$ ways of choosing which permutations are reversed,
we can concatenate $2^d$  images of counterexamples from
Lemma~\ref{lem:dual_view} by reversing some permutations
so that the situation of Lemma~\ref{lem:dual_view} appears whatever
choice of reversing is done.
\qed

\end{document}
\section{Finding an embedding}\label{se:testing}

For a set of $k>d$ permutations, there is not always a monotone simultaneous embedding, so
a
 natural question is to decide if a particular set of permutations
is embeddable or not. For $d=2$ and $k=3$, Aichholzer et
al.~\cite{DBLP:journals/jgaa/AichholzerHLMPV15} have shown that such
a decision can be done in polynomial time using linear programming
formulation
\cite[Corollary~12]{DBLP:journals/jgaa/AichholzerHLMPV15}.
 For that, they first proved that for three
paths, if the monotone simultaneous embedding exists then it also
exists for all possible directions of monotonicity
\cite[Theorem~9]{DBLP:journals/jgaa/AichholzerHLMPV15},
and
then, they showed that for any number of paths and fixed directions of
monotonicity the decision problem is solvable in polynomial time
\cite[Theorem~11]{DBLP:journals/jgaa/AichholzerHLMPV15}.
The formulation of the linear program for the proof of
\cite[Theorem~11]{DBLP:journals/jgaa/AichholzerHLMPV15}
utilizes the dual setting. Here, we extend
\cite[Corollary~12]{DBLP:journals/jgaa/AichholzerHLMPV15}
 to higher dimension, with a slight difference that our
formulation of the linear program is in the primal setting.
We observe that the extension of
\cite[Theorem~11]{DBLP:journals/jgaa/AichholzerHLMPV15},
on
simultaneous embeddability with fixed directions of monotonicity, does
not seem possible in a straightforward way, since even for $d=3$ the
constraints become quadratic.

\begin{lemma}
\label{lem:primal_algo_one}
Given $d+1$ permutations on $n$ vertices,
a monotone simultaneous embedding in $d$ dimensions
can be found, if it exists, by solving $2^d$ linear programs with
$(n-1)(d+1)$ constraints and $dn$ variables.
\end{lemma}
\begin{proof}
Up to a linear transformation, we can choose
$\vec{v}_j=\vec{e}_j$ for $1\leq j\leq d$
and up to a shear transform
$\vec{v}_{d+1}=\sum_{1\leq j\leq d} \xi_j \vec{e}_j$
where $\pth{\vec{e}_j}_{1\leq j\leq d}$ is the canonical basis of
$\R^d$ and $\xi_j\in\{-1,1\}$.
Without loss of generality (up to renumbering the vertices)
we can assume $\pi_{d+1}=(1,2,\dots,d)$.

We get these equations for an embedding to be monotone,
where $x_{i,j}$ denotes the $j$th coordinate of the $i$th point:
\[
\forall j \in[1,d]\quad
\forall i \in[1,n-1]\qquad
 x_{\pi_j(i),j}  \;\leq\;  x_{\pi_j(i+1),j} ,
\]
\[
\forall i \in[1,n-1]\qquad \sum_{1\leq j\leq d} \xi_j x_{i,j}  \;\leq\; \sum_{1\leq j\leq d} \xi_j x_{i+1,j}.
\]
This set of $(n-1)(d+1)$ inequalities defines a convex in $\R^{dn}$.
If this convex is non-empty, the paths are embeddable.
This can be decided solving a linear program.
Computing the whole set of possible embeddings is equivalent to a
convex hull of $(n-1)(d+1)$ points in dimension $dn$
that can be done in  $O\pth{(n{(d+1)})^{\floor{\frac{nd}{2}}}}$.
The linear program has to be solved for all $2^d$ possible vectors $(\xi_j)$.\qed
\end{proof}


Aichholzer et al.~\cite{DBLP:journals/jgaa/AichholzerHLMPV15}
  discussed the relation of the problem of computing a monotone
  simultaneous embedding to the problem of realization of a given
  order type  on the plane, implying that there exist instances of the
  former for which the size of the solution is exponential in the
  input size, and hence, there cannot be a polynomial-time algorithm
  producing a solution. In the following we provide  an
  exponential algorithm that
  finds the solution.
We  first derive the algorithm for $k=4$ and then explain how it can be
  extended to any $k$.

\begin{lemma}
\label{lem:primal_algo_two}
Given $4$ permutations on $n$ vertices,
a monotone simultaneous embedding in $2$ dimensions
can be found, if it exists, in
$O\pth{n^{4n-1}\log n}$.
\end{lemma}
\begin{proof}
We find all solutions for the first $3$ permutations if they exist as
in Lemma~\ref{lem:primal_algo_one}, by computing the convex hull in
  dimension $2n$ of fewer than $3n$ points.
We denote $\pth{(x_i,y_i)}_{1\leq i\leq n}$ the coordinates of a solution.
Let $\vec{v}_{4}= (\alpha,\xi)$ with $\xi\in\{-1,1\}$.
The path determined by the permutation $\pi_4$ is monotone with
  respect to $\vec{v}_4$ iff  for each $i=1,\dots,n-1$, the angle
    between the vector
  from $(x_{\pi_4(i)},y_{\pi_4(i)})$ to $(x_{\pi_4(i{+1})},y_{\pi_4(i{+1})})$
  and the vector $\vec{v}_4$ 
is smaller than $\pi$.

This translates to the following constraints:

\[
\forall i \in[1,n-1]\qquad 0 \leq\;  \alpha\pth{x_{\pi_{4}(i+1)}- x_{\pi_{4}(i)}} +  \xi\pth{y_{\pi_{4}(i+1)}- y_{\pi_{4}(i)}}
\]

which can be rewritten as:
 \begin{eqnarray*}
\forall i \in[1,n-1]& \mbox{if }x_{\pi_{4}(i+1)}- x_{\pi_{4}(i)}>0\qquad& \alpha \geq \xi\frac{ y_{\pi_{4}(i+1)}- y_{\pi_{4}(i)}}{x_{\pi_{4}(i+1)}- x_{\pi_{4}(i)}}
\\                         & \mbox{if }x_{\pi_{4}(i+1)}- x_{\pi_{4}(i)}<0\qquad& \alpha \leq \xi\frac{ y_{\pi_{4}(i+1)}- y_{\pi_{4}(i)}}{x_{\pi_{4}(i+1)}- x_{\pi_{4}(i)}}.
\end{eqnarray*}
In other words, for each of the two possible choices of $\xi$, our
problem is to find a suitable $\alpha$ in
particular cells of the arrangement of $n-1$ surfaces in
  dimension $2n$,
  namely the cells defined as the sandwich region
  that lies above the upper envelope of surfaces that
  defines a minoration of $\alpha$ and below the lower envelope of surfaces that
  defines a majoration of $\alpha$.
This problem can be solved using classical
algorithms~\cite[Theorem 3.4]{chazelle1991singly}.
\qed
\end{proof}

\begin{theorem}
\label{th:primal_algo}
Given $k$ permutations on $n$ vertices,
a monotone simultaneous embedding in $2$ dimensions
can be found, if it exists, in $O\pth{  \pth{kn^{4n-1}}^{4n-1}  \log n}$  .
\end{theorem}
\begin{proof}
We repeat the algorithm of Lemma~\ref{lem:primal_algo_two}
with the first $d+1$ permutations and the last permutations being
any of the $k-d-1$ other permutations in turn.
For each choice of the last permutation, we have
two possible values for $\xi$,
and we obtain two subsets of $\R^{2d}$
of size $O\pth{n^{4n-1}}$
describing the possible embeddings for that choice.
Combining all the choices is done by just intersecting
the above $k-d-1$ subsets of  $\R^{2d}$, which can be done
in $O\pth{  \pth{kn^{4n-1}}^{4n-1} \log n }$.
\tamara{what is the time complexity here?}
\olivier{not clear may be only $O\pth{n^{4n(k-d-1)-1}\log n}$ ,
  current version is conservative}
\qed
\end{proof}

\end{document}